\documentclass[11pt]{article}
\usepackage[a4paper,margin=1in]{geometry}
\usepackage{amsmath,amssymb,amsthm}
\usepackage{mathtools}
\usepackage[hidelinks]{hyperref}
\usepackage{enumitem}
\usepackage{booktabs}
\usepackage{orcidlink}
\theoremstyle{plain}
\newtheorem{theorem}{Theorem}
\newtheorem{proposition}{Proposition}
\newtheorem{corollary}{Corollary}

\theoremstyle{definition}
\newtheorem{definition}{Definition}
\theoremstyle{remark}
\newtheorem{remark}{Remark}
\renewcommand{\Pr}{\mathbb{P}}
\newcommand{\Hent}{H}
\newcommand{\Zmod}[1]{\mathbb{Z}_{#1}}
\newcommand{\keyspace}{\mathcal{K}}
\title{New Ideas on a New Old Type of Cipher:\\
The Mixed-Radix One-Time Pad}
\author{Fabio F.G. Buono~\orcidlink{0009-0004-9199-2793} \\ Independent Researcher}
\date{\today}
\begin{document}
\maketitle
\begin{abstract}
In a short 2012 preprint, an unconventional cipher was introduced, now in this note we take
that representational core, formalize it, and use it as the basis for a
clean generalization of the one-time pad to non-uniform bases, which we call the
\emph{Mixed-Radix One-Time Pad} (MR-OTP). And we prove that the MR-OTP achieves Shannon
perfect secrecy, show that the classical binary OTP is exactly the all-bases-equal-2
special case, and that fixed-base variants recover OTPs over arbitrary alphabets. We
then examine whether secret bases can lower the key entropy required for perfect secrecy 
(they cannot). We close with a usable session protocol based on key rolling that preserves
perfect secrecy, and with an honest account of the open problems.
\end{abstract}
\begin{center}
  \small Preprint
\end{center}
\section{Introduction}
Everything starts from the preprint \emph{A New Type of Cipher}~\cite{buono2012}, its underlying
idea is that an integer can be decomposed, by repeated Euclidean-style division against a 
chosen sequence of decreasing divisors, into a sequence of quotients. This is, in essence, 
a representation of the integer in a system of \emph{mixed bases} (a mixed-radix numeral system). 
The preprint remained an undeveloped sketch and, to our knowledge, has not been cited.

In this note we extract it, formalize it cleanly, and build on it. Our contributions are:

\begin{enumerate}[label=(\roman*)]
  \item We formalize the mixed-radix representation underlying~\cite{buono2012} and
        define the \emph{Mixed-Radix One-Time Pad} (MR-OTP), the natural one-time pad
        over a mixed-radix digit system (Section~\ref{sec:mrotp}).
  \item We prove that the MR-OTP satisfies Shannon perfect secrecy, and that it is
        correct (i.e., decryption is the exact inverse of encryption). We show the
        classical binary OTP is the special
        case where every base equals~2, and that any constant base $b$ recovers the OTP
        over an alphabet of size~$b$ (Section~\ref{sec:secrecy}).
  \item We state and resolve a tempting conjecture: that \emph{secret} bases might let
        the message-covering key be shorter than the message while retaining perfect
        secrecy. We show this is false in the strong (information-theoretic) sense, and
        explain precisely why, it would break the uniformity hypothesis that the
        secrecy proof requires (Section~\ref{sec:entropy}).
  \item We describe a usable session protocol based on key rolling that preserves
        perfect secrecy, in which the mixed bases provide encoding efficiency and
        adaptivity rather than a reduction in key length (Section~\ref{sec:protocol}).
\end{enumerate}

\section{The Mixed-Radix Representation and the MR-OTP}
\label{sec:mrotp}

\subsection{Mixed-radix representation}
Fix a length $L$ and a sequence of bases $B=(b_1,\dots,b_L)$ with each $b_i\ge 2$. The
\emph{digit space} is
\[
  \mathcal{D}_B \;=\; \prod_{i=1}^{L}\{0,1,\dots,b_i-1\}.
\]
A message $M$ is identified with its digit tuple $(m_1,\dots,m_L)\in\mathcal{D}_B$,
where $0\le m_i<b_i$. With positional weights
\[
  W_i \;=\; \prod_{j>i} b_j, \qquad W_L=1,
\]
the tuple corresponds to the integer $\sum_{i=1}^L m_i W_i$, and conversely every
integer in $\{0,\dots,\prod_i b_i-1\}$ has a unique such representation. This is the
mixed-radix numeral system, and it is the ``cleaned-up'' form of the decreasing-divisor
decomposition of~\cite{buono2012}, with the bases playing the role of the divisors but
without the side constraints (decreasing order, remainder handling) that made the
original sketch awkward to analyze.

\subsection{The cipher}

\begin{definition}[MR-OTP]
Let $B=(b_1,\dots,b_L)$ be a fixed sequence of bases with $b_i\ge2$. The key space is
\[
  \keyspace \;=\; \prod_{i=1}^{L}\{0,1,\dots,b_i-1\},
\]
and a key $K=(k_1,\dots,k_L)$ is drawn uniformly at random from $\keyspace$.
Encryption of a message $M=(m_1,\dots,m_L)$ is performed digitwise:
\[
  c_i \;=\; (m_i + k_i)\bmod b_i, \qquad C=(c_1,\dots,c_L).
\]
Decryption, given $K$, is
\[
  m_i \;=\; (c_i - k_i)\bmod b_i.
\]
\end{definition}

So the construction is the obvious analogue of the binary OTP, with bitwise XOR replaced
by digitwise addition modulo $b_i$ in each (possibly distinct) base.

\section{Perfect Secrecy and Correctness}
\label{sec:secrecy}

\subsection{Correctness}

\begin{proposition}[Correctness]
For every key $K\in\keyspace$ and every ciphertext $C$, decryption returns the unique
message $M$ that is encrypted.
\end{proposition}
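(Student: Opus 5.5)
The plan is to work one digit at a time. For a fixed key $K$, both encryption and decryption act independently on each coordinate, so it suffices to show that each coordinate map is a bijection of $\Zmod{b_i}$ whose inverse is the decryption map. Concretely, fix $i$ and define
\[
E_{k_i}(m)=(m+k_i)\bmod b_i,\qquad D_{k_i}(c)=(c-k_i)\bmod b_i,
\]
both as maps $\{0,\dots,b_i-1\}\to\{0,\dots,b_i-1\}$.

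\textbf{Step 1: decryption undoes encryption.} For $0\le m<b_i$, the computation in $\Zmod{b_i}$ gives
\[
D_{k_i}(E_{k_i}(m))\equiv (m+k_i)-k_i\equiv m \pmod{b_i}.
\]
Because $m$ is already a least residue, the reduction returns exactly $m$ and not just something congruent to it.

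\textbf{Step 2: encryption undoes decryption.} By the same calculation, $E_{k_i}(D_{k_i}(c))=c$ for every $0\le c<b_i$.

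\textbf{Step 3: conclude coordinatewise.} Steps 1 and 2 show that $E_{k_i}$ and $D_{k_i}$ are mutually inverse bijections on the digit set. Taking the product over $i$, the map $\mathrm{Enc}_K:\mathcal{D}_B\to\mathcal{D}_B$ is a bijection with inverse $\mathrm{Dec}_K$. So every ciphertext $C$ is the encryption of exactly one message, namely $M=\mathrm{Dec}_K(C)$. This is the uniqueness claimed in the statement, and it gives $\mathrm{Dec}_K(\mathrm{Enc}_K(M))=M$.

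\textbf{Where the care is needed.} The argument is essentially routine. The only point that needs attention is the well-definedness and range bookkeeping in Step 1. The operator $\bmod$ must be read as returning the least nonnegative residue, even when $c_i-k_i$ is negative. The digit ranges $0\le m_i,c_i<b_i$ must match the key ranges $0\le k_i<b_i$, so that the ciphertext lies in $\mathcal{D}_B$. A finite-set cardinality argument gives injectivity and surjectivity simultaneously, so Step 2 could be dropped. I will include it anyway because it directly yields the phrase ``every ciphertext $C$'' in the statement.
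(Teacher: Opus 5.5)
Your proof is correct and follows essentially the same route as the paper: both work coordinatewise, showing that $x\mapsto (x+k_i)\bmod b_i$ is a bijection (the paper calls it a translation on $\Zmod{b_i}$) with inverse $x\mapsto (x-k_i)\bmod b_i$, and then take the product over positions. Your bookkeeping about least nonnegative residues and digit ranges only spells out what the paper's group-theoretic phrasing leaves implicit.
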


\begin{proof}
For each position $i$, the map $x\mapsto (x+k_i)\bmod b_i$ is a translation on the
finite cyclic group $\Zmod{b_i}$, hence a bijection, with inverse
$x\mapsto (x-k_i)\bmod b_i$. Therefore $c_i$ determines $m_i$ uniquely given $k_i$, and
the tuple $C$ determines $M$ uniquely given $K$.
\end{proof}

\subsection{Perfect secrecy}

We use Shannon's definition~\cite{shannon1949}.

\begin{definition}[Perfect secrecy]
A cipher has \emph{perfect secrecy} if for every message distribution $\Pr[M]$ and
every ciphertext $c$ with $\Pr[C=c]>0$,
\[
  \Pr[M=m\mid C=c] \;=\; \Pr[M=m] \qquad \text{for all } m.
\]
\end{definition}

\begin{theorem}[Perfect secrecy of the MR-OTP]
\label{thm:secrecy}
The MR-OTP with a uniformly random key over $\keyspace$ has perfect secrecy.
\end{theorem}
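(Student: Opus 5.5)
The plan is the standard Shannon argument: show that the ciphertext distribution conditioned on any fixed message is uniform on $\mathcal{D}_B$, and then apply Bayes' rule. Throughout, the key $K$ is drawn uniformly from $\keyspace$ independently of $M$, which is implicit in the definition of the MR-OTP. Write $N=\prod_{i=1}^L b_i = |\keyspace| = |\mathcal{D}_B|$.

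\textbf{Step 1 (bijection for a fixed message).} Fix a message $m\in\mathcal{D}_B$. By the proof of the Correctness proposition, each coordinate map $k_i\mapsto (m_i+k_i)\bmod b_i$ is a bijection of $\Zmod{b_i}$. Hence the product map
\[
  \keyspace\to\mathcal{D}_B,\qquad K\mapsto C
\]
is a bijection, with explicit inverse $k_i=(c_i-m_i)\bmod b_i$. So for every $c\in\mathcal{D}_B$ there is exactly one key taking $m$ to $c$.

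\textbf{Step 2 (conditional ciphertext distribution).} Since $K$ is uniform and independent of $M$, Step 1 gives
\[
  \Pr[C=c\mid M=m]=\Pr[K=k^{(m,c)}]=1/N
\]
for every $c$ and every $m$ with $\Pr[M=m]>0$. Here $k^{(m,c)}$ denotes the unique key from Step 1.

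\textbf{Step 3 (Bayes' rule).} By total probability, $\Pr[C=c]=\sum_{m}\Pr[M=m]\cdot\frac1N=\frac1N>0$ for every $c$. Bayes' rule then gives
\[
  \Pr[M=m\mid C=c]=\frac{\Pr[C=c\mid M=m]\,\Pr[M=m]}{\Pr[C=c]}=\Pr[M=m].
\]
Messages with $\Pr[M=m]=0$ satisfy the identity trivially, since both sides are $0$.

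None of these steps is a real obstacle. The only points that need care are stating the independence of $K$ and $M$ explicitly, and handling zero-probability messages. The fact that the $b_i$ differ is harmless, because the bijection argument works coordinatewise.
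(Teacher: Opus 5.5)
Your proof is correct and follows the paper's argument: you show $\Pr[C=c\mid M=m]$ is the message-independent constant $\prod_i 1/b_i$, compute the marginal by total probability, and conclude by Bayes' rule. The differences are cosmetic. You get the constant from a single bijection $\keyspace\to\mathcal{D}_B$ rather than by multiplying per-position factors via independence of the key digits. You also state the independence of $K$ and $M$ and handle zero-probability messages explicitly, both of which the paper leaves implicit.
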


\begin{proof}
We compute the three quantities of Bayes' rule.

\emph{Step 1: the conditional $\Pr[C=c\mid M=m]$.}
Fix a position $i$ and fix $m_i,c_i$. The equation $c_i=(m_i+k_i)\bmod b_i$ has exactly
one solution $k_i\equiv (c_i-m_i)\pmod{b_i}$ in $\{0,\dots,b_i-1\}$. Since $k_i$ is
uniform on $\{0,\dots,b_i-1\}$,
\[
  \Pr[c_i\mid m_i] \;=\; \frac{1}{b_i}.
\]
The key digits $k_1,\dots,k_L$ are chosen \emph{independently}, so the events across
positions are independent given $M$, and
\[
  \Pr[C=c\mid M=m] \;=\; \prod_{i=1}^{L}\Pr[c_i\mid m_i] \;=\; \prod_{i=1}^{L}\frac{1}{b_i},
\]
which does not depend on $m$.

\emph{Step 2: the marginal $\Pr[C=c]$.}
By the law of total probability and Step~1,
\[
  \Pr[C=c] \;=\; \sum_{m'} \Pr[C=c\mid M=m']\,\Pr[M=m']
           \;=\; \Big(\prod_{i=1}^{L}\tfrac{1}{b_i}\Big)\sum_{m'}\Pr[M=m']
           \;=\; \prod_{i=1}^{L}\frac{1}{b_i},
\]
since $\sum_{m'}\Pr[M=m']=1$.

\emph{Step 3: the posterior.} By Bayes' rule,
\[
  \Pr[M=m\mid C=c]
  \;=\; \frac{\Pr[C=c\mid M=m]\,\Pr[M=m]}{\Pr[C=c]}
  \;=\; \frac{\big(\prod_i \tfrac{1}{b_i}\big)\Pr[M=m]}{\prod_i \tfrac{1}{b_i}}
  \;=\; \Pr[M=m],
\]
for all $m,c$. This is perfect secrecy.
\end{proof}

\begin{remark}
About the independence of the key digits, invoked in Step~1, it is what makes
the per-position factor $1/b_i$ multiply into a message-independent product. The
uniformity of each $k_i$ over the \emph{entire} set $\{0,\dots,b_i-1\}$ is equally
essential; we return to this point in Section~\ref{sec:entropy}.
\end{remark}

\subsection{The OTP as a special case}

\begin{corollary}[Binary OTP]
\label{cor:binary}
With $b_i=2$ for all $i$, the MR-OTP coincides exactly with the classical binary
one-time pad.
\end{corollary}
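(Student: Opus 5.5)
The plan is to substitute $b_i=2$ into the definition of the MR-OTP and check, component by component, that we get exactly the classical binary one-time pad: same message space, same key space with the same distribution, and the same encryption and decryption maps.

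First, the spaces. With $b_i=2$ for every $i$, both the digit space $\mathcal{D}_B$ and the key space $\keyspace$ become $\{0,1\}^L$, the set of $L$-bit strings. A uniform key on $\keyspace$ is therefore a uniform $L$-bit string. This is the same as saying the $L$ key bits are independent fair coins, which is exactly the key distribution of the binary OTP. The weights become $W_i=2^{L-i}$, so the integer attached to a message is just its ordinary binary value. That is a side remark, since both ciphers act on the bit tuple.

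Second, the maps. I will use the elementary identity $(x+y)\bmod 2 = x\oplus y$ for $x,y\in\{0,1\}$, checked on the four cases. It gives $c_i=(m_i+k_i)\bmod 2 = m_i\oplus k_i$, so encryption is bitwise XOR with the key. For decryption, $-k_i\equiv k_i \pmod 2$ because $2k_i\equiv 0$. Hence $(c_i-k_i)\bmod 2=(c_i+k_i)\bmod 2 = c_i\oplus k_i$, which is the binary OTP's decryption rule (XOR is self-inverse).

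Since all components agree, the two ciphers are literally the same scheme. Theorem~\ref{thm:secrecy} then specializes to Shannon's classical perfect-secrecy result for the OTP, with $\Pr[C=c]=2^{-L}$. I do not expect a real obstacle. The only point needing care is to be explicit about what "coincides exactly" covers, namely that the key distribution matches and not only the encryption map, and that is settled by noting that uniform on $\prod_i\{0,1\}$ means i.i.d.\ uniform bits.
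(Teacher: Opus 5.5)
Your proposal is correct and takes the same route as the paper: substitute $b_i=2$ and use $(x+y)\bmod 2 = x\oplus y$ to identify encryption with bitwise XOR. Your extra checks, that a uniform key on $\prod_i\{0,1\}$ is an i.i.d.\ uniform bit string and that decryption also reduces to XOR because $-k_i\equiv k_i \pmod 2$, spell out what the paper's one-line proof leaves implicit.
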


\begin{proof}
With $b_i=2$ we have $m_i,k_i\in\{0,1\}$ and
$c_i=(m_i+k_i)\bmod 2 = m_i \oplus k_i$, addition modulo~2 being exactly XOR. Hence
$C=M\oplus K$ bitwise, the definition of the binary OTP.
\end{proof}

\begin{corollary}[Single-alphabet OTP]
\label{cor:alphabet}
With $b_i=b$ constant for all $i$, the MR-OTP is the one-time pad over an alphabet of
size $b$ (e.g.\ $b=26$ for the Latin alphabet, $b=4$ for a four-symbol alphabet such as
nucleotide data).
\end{corollary}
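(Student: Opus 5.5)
The plan is to argue exactly as in Corollary~\ref{cor:binary}: specialize the definitions of the MR-OTP to $b_i=b$ and compare, term by term, with the standard definition of the one-time pad over an alphabet of size $b$. I take that standard definition to be the following. Identify the alphabet with $\Zmod{b}$ via a fixed bijection, e.g.\ $A\mapsto 0,\dots,Z\mapsto 25$ for $b=26$, or $\{A,C,G,T\}\mapsto\{0,1,2,3\}$ for $b=4$. A message is a word $(m_1,\dots,m_L)\in\Zmod{b}^L$, and the key is a word $(k_1,\dots,k_L)$ drawn uniformly from $\Zmod{b}^L$. Encryption is $c_i=(m_i+k_i)\bmod b$ and decryption is $m_i=(c_i-k_i)\bmod b$; this is the Vigenère rule with a random key as long as the message.

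With $B=(b,\dots,b)$, the digit space becomes $\mathcal{D}_B=\{0,\dots,b-1\}^L$, and the key space becomes $\keyspace=\{0,\dots,b-1\}^L$. Both coincide with $\Zmod{b}^L$ under the fixed identification. Uniformity of $K$ over $\keyspace$ is then exactly uniformity over $\Zmod{b}^L$, so the key distributions agree. Equivalently, the digits $k_i$ are i.i.d.\ uniform on $\Zmod{b}$. The encryption rule $c_i=(m_i+k_i)\bmod b_i$ reduces literally to $c_i=(m_i+k_i)\bmod b$, and likewise for decryption. Hence the two ciphers have the same message, key and ciphertext spaces, the same key distribution, and the same encryption and decryption maps. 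That is, they are the same cipher. As a sanity check, perfect secrecy and correctness of the $b$-ary OTP follow immediately from Theorem~\ref{thm:secrecy} and the Correctness proposition.

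I would also note the positional weights: they become $W_i=b^{L-i}$, so the mixed-radix integer is just the base-$b$ reading of the word. This shows that the two identifications of messages with integers agree too, although the cipher itself never uses the integer.

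There is essentially no obstacle. The only point needing care is stating precisely what "the OTP over an alphabet of size $b$" means, since the statement is a definitional identification. One should also remark that the choice of bijection between the alphabet and $\Zmod{b}$ is immaterial: any relabeling yields an equivalent cipher, and it does not affect secrecy.
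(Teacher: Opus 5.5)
Your proposal is correct and follows essentially the same route as the paper, which sets $b_i=b$, notes that the digit space becomes $\{0,\dots,b-1\}^L$ and encryption becomes $c_i=(m_i+k_i)\bmod b$, and concludes this is the one-time pad over an alphabet of size $b$. You spell out more than the paper does (the key distribution, decryption, the positional weights $W_i=b^{L-i}$, and a precise definition of the alphabet OTP), but it is the same definitional identification.
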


\begin{proof}
With $b_i=b$ for all $i$, the digit space becomes $\{0,\dots,b-1\}^L$, encryption
is $c_i=(m_i+k_i)\bmod b$, and the cipher is exactly the one-time pad over an
alphabet of size~$b$.
\end{proof}

These corollaries position the MR-OTP as a generalization rather than a rival
of the one-time pad and the classical and single-alphabet pads are recovered as points in
a family, and the truly new territory is that of \emph{mixed} (non-constant) bases,
where the digit system can be matched to the structure of the data being encrypted.

\section{Can Secret Bases Reduce the Required Key Entropy?}
\label{sec:entropy}

A natural hope is that, by treating the bases $B$ as part of the secret key, one might
get away with a message-covering key shorter than the message while keeping perfect
secrecy, the long-standing dream of a ``perfect cipher with short keys.'' We make the
question precise and show the answer is negative.

\subsection{Setup}

Now suppose the key has two parts, $K=(B,\mathbf{k})$, where $B=(b_1,\dots,b_L)$ is now
secret and $\mathbf{k}=(k_1,\dots,k_L)$ is the additive pad with $k_i\in\{0,\dots,b_i-1\}$.
Write $\Hent$ for Shannon entropy. The entropy of the key decomposes as
\[
  \Hent(K) \;=\; \Hent(B) + \Hent(\mathbf{k}\mid B).
\]
Given $B$, and we assuming, as required for perfect secrecy, that $\mathbf{k}$ is uniform
and independent given $B$, a uniform pad contributes
\[
  \Hent(\mathbf{k}\mid B) \;=\; \sum_{i=1}^{L}\log_2 b_i,
\]
while the message, living in $\mathcal{D}_B$, satisfies
\[
  \Hent(M) \;\le\; \sum_{i=1}^{L}\log_2 b_i,
\]
with equality when $M$ is uniform.

\subsection{The accounting identity and what it does \emph{not} buy}

Combining the above,
\[
  \Hent(K) \;=\; \underbrace{\Hent(B)}_{\ge 0} \;+\; \underbrace{\Hent(\mathbf{k}\mid B)}_{\ge \Hent(M)}
           \;\ge\; \Hent(M).
\]
This inequality is consistent with the Shannon bound $\Hent(K)\ge \Hent(M)$ for the perfect
secrecy~\cite{shannon1949}, but it is merely an accounting identity: the pad
$\mathbf{k}$ \emph{already} carries entropy $\sum_i \log_2 b_i \ge \Hent(M)$ on its own,
and the base entropy $\Hent(B)$ sits \emph{on top of} the requirement.

\begin{proposition}[Secret bases cannot substitute for pad entropy]
\label{prop:noshortcut}
Suppose the MR-OTP retains perfect secrecy. Then the pad $\mathbf{k}$ must, on each
position $i$, be uniform over the full set $\{0,\dots,b_i-1\}$, and hence
$\Hent(\mathbf{k}\mid B)\ge \Hent(M)$. In particular, base entropy $\Hent(B)$ cannot be
used to make $\mathbf{k}$ shorter than what is required to cover $\Hent(M)$.
\end{proposition}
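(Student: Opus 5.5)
The plan is to prove the proposition in two parts: first a per-position uniformity statement, then the entropy consequence. For the first part I will not start from ``$\mathbf{k}$ is uniform'' as in the setup. Instead I start from perfect secrecy and derive uniformity, using a message distribution chosen to expose the pad.

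\emph{Step 1: conditioning on the bases.} I fix a value $\beta$ of $B$ with $\Pr[B=\beta]>0$ and reason conditionally on $B=\beta$. The message is taken independent of the key, and with support in $\mathcal{D}_\beta$. This conditioning is where I expect the real difficulty. An adversary who does not know $B$ faces a mixture over bases, and perfect secrecy of the mixture does not obviously force perfect secrecy of each component.

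To handle this, I will interpret ``the MR-OTP retains perfect secrecy'' as holding for every admissible message distribution. In particular it holds for distributions chosen as functions of the public parameters, which is how the statement is meant: the bases determine the digit space of $M$. If that reading is too strong, the fallback is to assume $B$ is revealed, or is recoverable from $C$ via its format. Either way we obtain perfect secrecy conditionally on $B=\beta$. I will state this assumption explicitly, since it is the crux.

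\emph{Step 2: uniformity of each $k_i$.} Conditional on $B=\beta$, I fix a position $i$ and apply perfect secrecy with $M$ uniform on two messages $m,m'$. These messages agree everywhere except at position $i$, where they take the values $0$ and an arbitrary $a\in\{0,\dots,b_i-1\}$. Since $M$ and $C$ are independent, $\Pr[C=c\mid M=m]=\Pr[C=c\mid M=m']$ for every $c$. I then marginalize to position $i$. Because encryption is digitwise and the other digits of $m,m'$ coincide, this gives $\Pr[k_i=c_i]=\Pr[k_i=c_i-a \bmod b_i]$ for all $c_i$ and $a$. Hence $k_i$ is uniform on $\{0,\dots,b_i-1\}$, so $\Hent(k_i\mid B=\beta)=\log_2 b_i$.

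\emph{Step 3: the entropy bound.} If the full joint law of $\mathbf{k}$ is needed, the same shift argument with messages differing in several positions shows $\mathbf{k}$ is uniform on $\mathcal{D}_\beta$. The marginals alone do not suffice, so I prefer a direct route. Perfect secrecy applied to uniform $M$ on $\mathcal{D}_\beta$ makes $C$ uniform on $\mathcal{D}_\beta$. For a fixed message $m$, the map $\mathbf{k}\mapsto C$ is a bijection by the Correctness proposition. Therefore $\mathbf{k}$ is uniform on $\mathcal{D}_\beta$, and
\[
\Hent(\mathbf{k}\mid B=\beta)=\sum_{i=1}^{L}\log_2\beta_i\ge \Hent(M\mid B=\beta).
\]

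\emph{Step 4: averaging and conclusion.} Averaging over $\beta$ gives $\Hent(\mathbf{k}\mid B)\ge \Hent(M\mid B)$. When $M$ is independent of $B$, as in the setup, this equals $\Hent(M)$. The final sentence of the proposition then follows: $\Hent(B)$ enters only additively in $\Hent(K)=\Hent(B)+\Hent(\mathbf{k}\mid B)$, so it cannot lower the bound on $\Hent(\mathbf{k}\mid B)$.
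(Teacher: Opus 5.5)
Your Step~2 is the paper's argument made rigorous. The paper observes that if some $k_i$ is not uniform then $\Pr[c_i\mid m_i]$ depends on $m_i$ and secrecy fails. Your two-point message distribution turns that observation into an actual derivation of uniformity from perfect secrecy.

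The routes diverge in two places. First, the paper dismisses the bases with the sentence that knowing or hiding $B$ is irrelevant, whereas you isolate secrecy conditional on $B=\beta$ as an explicit hypothesis. Your caution is justified, because that hypothesis cannot simply be dropped. Take $L=1$, $B$ uniform on $\{2,3\}$, messages in $\{0,1\}$ (valid in both digit spaces), and pad $k=1$ if $B=2$, $k=0$ if $B=3$. For either message $C$ is uniform on $\{0,1\}$, so the scheme is perfectly secret. Yet $k$ is not uniform given $B$, and $\Hent(\mathbf{k}\mid B)=0<1=\Hent(M)$ for uniform $M$: the secret base is doing the pad's job. So your argument proves the proposition in the regime where it actually holds (bases public or recoverable from $C$), while the paper's proof assumes that regime tacitly.

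Second, your Step~3 obtains joint uniformity of $\mathbf{k}$ given $B$, which is what the entropy bound needs. The paper passes from per-coordinate uniformity to $\Hent(\mathbf{k}\mid B)\ge\Hent(M)$ only via the standing assumption of Section~\ref{sec:entropy} that the pad digits are independent given $B$. As you note, uniform marginals alone do not bound the joint entropy.

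One adjustment is needed in the ``recoverable from $C$'' fallback. There $M$ must be independent of $B$, so it is supported in every $\mathcal{D}_\beta$, which rules out ``$M$ uniform on $\mathcal{D}_\beta$'' (and arbitrary digit values $a$ in Step~2). Use your shift variant with the messages $0$ and the unit vector $e_i$, which are valid in every digit space since all bases are at least $2$. This makes the law of $\mathbf{k}$ given $B=\beta$ invariant under every $e_i$, hence uniform on $\mathcal{D}_\beta$.
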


\begin{proof}
The secrecy proof of Theorem~\ref{thm:secrecy} requires, at Step~1, that for each $i$
the digit $k_i$ be uniform over the whole of $\{0,\dots,b_i-1\}$. This is what
forces $\Pr[c_i\mid m_i]=1/b_i$ independently of $m_i$. If on some position the pad
fails to range uniformly over $\{0,\dots,b_i-1\}$, which is what ``shortening
$\mathbf{k}$ below the message length'' necessarily entails for at least one
position, then the conditional $\Pr[c_i\mid m_i]$ depends on the $m_i$, the product no
longer factors into a message-independent constant, and the posterior need no longer
equal the prior, thus the perfect secrecy fails. Knowing or hiding $B$ is irrelevant to
this argument, since the failure is on the pad coordinate.
\end{proof} 

The upshot is positive in its own way, as secret bases do not lower the information-theoretic
key requirement, and whatever additional protection they might provide is computational in
nature and outside the scope of this note. So an adversary who does not know $B$ faces a
search problem over the base space before any decryption attempt, and whether this
constitutes meaningful computational security remains an open question.

\section{Perfect Secrecy with Key Rolling as an Usable Protocol}
\label{sec:protocol}

The practical objection to any one-time pad is key distribution. The MR-OTP inherits
this: by Proposition~\ref{prop:noshortcut}, perfect secrecy still demands pad material
at least as long as the message. What the mixed-radix structure adds is not shorter
keys but \emph{encoding efficiency and adaptivity}. We make this precise via a key-rolling
session protocol.

\subsection{Construction}

Alice and Bob share, once and offline, an initial key $K_0$. We \emph{partition}
(never reuse) $K_0$ and its successors. At session $t$, the active key segment is split
as
\[
  K_t \;=\; K_t^{\mathrm{msg}} \,\Vert\, B_{t+1} \,\Vert\, K_{t+1}^{\mathrm{msg}} \,\Vert\, \dots,
\]
where $\Vert$ denotes concatenation of disjoint key material. Session $t$ proceeds as:
\begin{enumerate}[label=\arabic*.]
  \item Encrypt $M_t$ as an MR-OTP under bases $B_t$ and pad $K_t^{\mathrm{msg}}$,
        where $|K_t^{\mathrm{msg}}|$ covers $\Hent(M_t)$ in full.
  \item Read off the next bases $B_{t+1}$ and the next pad segment
        $K_{t+1}^{\mathrm{msg}}$ from disjoint, never-before-used portions of the shared
        key material.
  \item Session $t+1$ uses bases $B_{t+1}$, chosen to match the expected structure of
        $M_{t+1}$.
\end{enumerate}

\subsection{Why this preserves perfect secrecy}

Because every pad segment and the every base segment is taken from disjoint, previously
unused key material, no key bit ever encrypts more than its own entropy's worth of
content. The protocol is thus a partitioned one-time pad, where each session is an
independent MR-OTP to which Theorem~\ref{thm:secrecy} applies directly. The crucial
discipline is that the material used to convey $B_{t+1}$ and $K_{t+1}^{\mathrm{msg}}$
is \emph{never} the same material used to encrypt $M_t$, as reusing it would make some
key segment cover more than its entropy and would break secrecy as in
Proposition~\ref{prop:noshortcut}

\subsection{Role and limits of adaptive bases}

There are two distinct ways adaptive bases might be understood, and conflating them is
the classic error.

\begin{description}[leftmargin=1.5em]
  \item[Encoding efficiency (always safe).] Choosing $B_{t+1}$ to match the alphabet of
        $M_{t+1}$ (e.g.\ base 26 for text, base 4 for nucleotide data) lets the message
        be represented natively, without padding or artificial binary conversion. The
        pad still covers $\Hent(M_{t+1})$ in full; secrecy is unaffected. The gain is
        representational convenience.
  \item[Entropy reduction (conditional, and dangerous).] Choosing $B_{t+1}$ because one
        \emph{assumes} $M_{t+1}$ is non-uniform---and then shortening the pad to the
        assumed $\Hent(M_{t+1})<\sum_i\log_2 b_i$---can be consistent with Shannon
        \emph{only if the message model is exact}. If the real message deviates from the
        assumed model, perfect secrecy is lost. This is precisely the trap that has sunk
        many proposed ``usable perfect'' ciphers.
\end{description}

We therefore recommend, and analyze, adaptive bases strictly in the first sense:
efficiency and flexibility, not a covert reduction of key length.

\subsection{Key consumption and overhead of base transmission}

Because the bases $B_{t+1}$ are themselves transmitted using fresh key material, their
encoding has a cost that must be accounted for. Suppose each base $b_i$ requires at
most $\beta$ bits to represent in the shared key (where $\beta$ depends on the
agreed-upon base alphabet, but is left as a parameter here to keep the result general).
Then the total key material consumed per session $t$ is
\[
  \underbrace{\sum_{i=1}^{L}\log_2 b_i}_{\text{pad}} \;+\; \underbrace{L\beta}_{\text{bases}},
\]
and the overhead fraction relative to the pad is $L\beta / \sum_i \log_2 b_i$. This
overhead is negligible when
\[
  \frac{1}{L}\sum_{i=1}^{L}\log_2 b_i \;\gg\; \beta,
\]
that is, when the average base is large relative to its own representation cost. In
typical cases of interest (e.g.\ bases matched to natural-language or biological
alphabets, where $b_i \ge 4$) this condition holds comfortably. In the degenerate case
$b_i = 2$ for all $i$, the overhead equals the pad length and the mixed-radix structure
provides no representational gain over the binary OTP, as expected.

\section{Connections and Context}
\label{sec:context}

\paragraph{Relation to the OTP over finite abelian groups.}
The key space $\keyspace = \prod_i \Zmod{b_i}$ is a finite abelian group under
componentwise addition, and the MR-OTP is the one-time pad on this group. The
perfect secrecy of OTPs over finite abelian groups is a classical fact, so the
MR-OTP is, algebraically, a special case of a well-known construction. What the
mixed-radix perspective adds is the explicit identification of the group with the
integer representation of a message: the digit structure connects the algebraic
object to the numeral system, and it is this connection that makes the encoding
efficiency of Section~\ref{sec:protocol} and the arithmetic coding analogy below
natural rather than coincidental.

\paragraph{Mixed-radix numeral systems and Diophantine approximation.}
Non-uniform positional systems are classical. The factorial number system ($b_i=i+1$)
is standard~\cite{alloucheshallit2003}, and Ostrowski's numeration~\cite{ostrowski1922}
associates to the continued-fraction expansion of an irrational $\alpha$ a mixed-radix
system whose digit bounds are the partial quotients; for $\alpha$ the golden ratio this
specializes to the Zeckendorf (Fibonacci) representation. Our digit system is a
finite-length, fixed-base instance of this much older and richer theory. The MR-OTP
adds a cryptographic reading of these systems, not new number theory.

\paragraph{Arithmetic coding and entropy.}
The ``adaptive bases'' of Section~\ref{sec:protocol}, in their safe (encoding) sense,
are in the spirit of arithmetic coding, where a message is represented in a numeral
system matched to the source so that the representation length approaches the source
entropy. Rissanen's early work~\cite{rissanen1976} on generalized Kraft inequalities
is a precursor to this line of ideas. Choosing $b_i$ to fit an alphabet is a special,
coarse case of this general technique. The link is real and useful, but it places the
construction inside a mature area of information theory rather than beside it.

\paragraph{What is \emph{not} hidden here.}

Before closing this section it is worth noting that the coincidence of the digit-space 
cardinality $\prod_i b_i$ and the state count of $L$ independent finite systems 
(with $\log\prod_i b_i=\sum_i\log b_i$) is simply the additivity of the logarithm, 
and that perfect secrecy, being information-theoretic, attaches to no complexity-class 
statement.

\section{Conclusion and Open Problems}

We formalized the representational idea of~\cite{buono2012} as the Mixed-Radix One-Time
Pad, proved perfect secrecy and correctness, and showed that secret bases cannot reduce
the information-theoretic key requirement. We gave a key-rolling protocol that keeps
perfect secrecy while using mixed bases for encoding efficiency, and characterized the
key consumption overhead of base transmission. The main open questions are the
average-case hardness of base recovery, optimal base selection for structured sources,
and composability of the rolling protocol across sessions and parties.

\section{Final Remarks}

\begin{remark}

By Shannon's theorem, any perfectly secret cipher must satisfy $|\keyspace| \ge |\mathcal{M}|$.
The MR-OTP meets this condition by construction, since $\keyspace = \mathcal{D}_B = \mathcal{M}$,
so the key space and the message space coincide.
\end{remark}

\begin{remark}
The MR-OTP achieves the tightest possible case of the Shannon's bound, as for any pair
$(M, C)$ there is just one key $K \in \keyspace$, namely $k_i = (c_i - m_i) \bmod b_i$
for each $i$, so no key material is wasted.
\end{remark}

\bibliographystyle{plainurl}
\bibliography{references}

\end{document}